\pgfplotsset{compat=newest} 
\pgfplotsset{plot coordinates/math parser=false}
\newtheorem{lemmabody}{Lemma}
\newenvironment{lemma}{
	\begin{lemmabody}
	}{
\end{lemmabody} 
}
\newenvironment{proof}{
	{\it Proof:}
}{
$\Box$
}
\DeclareMathOperator{\tr}{tr}
\DeclareMathOperator{\Ev}{E}
\begin{document}
\title{Analog Coding of a Source with Erasures}

\author{\IEEEauthorblockN{Marina Haikin}
\IEEEauthorblockA{EE - Systems Department\\Tel Aviv University\\
Tel Aviv, Israel\\
Email: mkokotov@gmail.com}
\and
\IEEEauthorblockN{Ram Zamir}
\IEEEauthorblockA{EE - Systems Department\\Tel Aviv University\\
Tel Aviv, Israel\\
Email: zamir@eng.tau.ac.il}}


\maketitle
\begin{abstract}
Analog coding decouples the tasks of protecting against erasures and noise. For erasure correction, it creates an "analog redundancy" by means of band-limited discrete Fourier transform (DFT) interpolation, or more generally, by an over-complete expansion based on a frame. We examine the analog coding paradigm for the dual setup of a source with "erasure" side-information (SI) at the encoder. The excess rate of analog coding above the rate-distortion function (RDF) is associated with the energy of the inverse of submatrices of the frame, where each submatrix corresponds to a possible erasure pattern. We give a partial theoretical as well as numerical evidence that a variety of structured frames, in particular DFT frames with difference-set spectrum and more general equiangular tight frames (ETFs), with a common MANOVA limiting spectrum, minimize the excess rate over all possible frames. However, they do not achieve the RDF even in the limit as the dimension goes to infinity.
\end{abstract}

\begin{IEEEkeywords}
Data compression, side information, signal amplification, DFT, analog codes, frames, difference set, Welch bound, equiangular tight frames, Jacobi/MANOVA distribution.
\end{IEEEkeywords}

\section{Introduction} \label{Introduction}
Consider an i.i.d source sequence ${\textbf{x}=(x_1,..,x_n)}$ from a normal distribution $\mathcal{N}(0,\sigma^2_x)$. The encoder has information regarding the indices of $k$ important samples. Denote by ${\textbf{s}=(s_1,..,s_n), s_i\in\{0,1\}}$, the vector of this side information. The decoder must reconstruct an ${n}$-dimensional vector ${\bf \hat{x}}$ where only the values of samples dictated by ${\bf s}$ matter, while at the non-important samples the distortion is zero:
\begin{equation} \label{eqDistortion}
D(x,\hat{x},s) = 
\begin{cases}
(\hat{x}-x)^2,& \text{if } s= 1 \text{ (important) } \\
0,              & \text{if } s=0 \text{ (not important) }.\\
\end{cases}
\end{equation}
In \cite{SCdistortionSI} it is shown that in this setting of "erasure" distortion, when the SI is Bernoulli(${p}$) process, the encoder side information is sufficient, and the RDF is equal to that in the case where the side information is available to both the encoder and decoder: 
\begin{equation} \label{eqTheoreticalRD}
R(D) = \frac{p}{2}\log\bigg(\frac{\sigma^2_x}{D}\bigg).
\end{equation}
Here ${R}$ is the rate per source sample, ${D}$ is the average distortion at the important samples, and ${p=\frac{k}{n}}$ represents the probability of important samples. This rate can be achieved by a "digital" coding scheme; i.e. an "${n}$"-dimensional random code with joint-typicality encoding, at an {\it exponential} cost in complexity \cite{SCdistortionSI}.

In this paper we explore the following low complexity "interpolate and quantize" analog coding scheme:
\begin{equation} \label{IQscheme}
\textbf{T}_{enc} \rightarrow Q \rightarrow \textbf{T}_{dec}
\end{equation}
and its achievable rate for different transforms ${\textbf{T}_{enc}}$ and ${\textbf{T}_{dec}}$. Here, ${\textbf{T}_{enc}}$ is an ${n:m}$ linear transformation that depends on ${\bf s}$, ${\textbf{T}_{dec}}$ is an ${m:n}$ linear transformation that is independent of ${\bf s}$, for some ${n\ge m\ge k}$, and ${Q(\cdot{})}$ denotes quantization. Typically we consider a constant ratio of important samples i.e ${k\approx n/2}$ and are interested in the asymptotic performance (${n\rightarrow \infty}$).

One motivation for the analog coding scheme comes from the solution given in \cite{SCdistortionSI} to a lossless version of this problem. In this setting, the encoder uses the Reed Solomon (RS) decoding algorithm to correct the erasures and determine the ${k}$ information symbols. It then transmits these symbols to the decoder at a rate of  ${\frac{k}{n}\log(J)}$ bits per sample, where ${J}$ is the size of the source alphabet. To reconstruct the source, the decoder uses the RS encoding algorithm to get the ${n}$ reconstructed samples, that coincide with the source at the ${k}$ non-erased samples, as desired. we can view the RS decoder as a system which performs {\it interpolation} of the erased source signal.

Such an approach could be extended to a continuous source, if we first quantize the important samples to  ${J}$ levels and then apply the RS code solution above.  However, this "quantize and interpolate" solution is limited to scalar quantization. In contrast, the scheme in (\ref{IQscheme}) reverses the order of quantization and interpolation and therefore it is not limited to scalar quantization. However, the interpolation step (${\textbf{T}_{enc}}$) typically suffers from a {\it signal amplification} phenomenon. This is the main issue we deal with as it results in an increase in rate. 
 
Our problem formulation is dual to Wolf's paradigm of analog channel coding, in which transform techniques are exploited for coding in the presence of impulse noise \cite{Wolf}. Wolf's scheme decouples impulse correction - by analog means - and additive white Gaussian noise (AWGN) protection - by digital means. The impulse-pattern dependent transform at the decoder introduces {\it noise amplification} for a general impulse pattern. In our case, the digital component is the quantizer, which is responsible for the lossy coding. The transform at the encoder  causes signal amplification whose severeness depends on the pattern of important samples. 

The main question which we explore is whether analog coding can asymptotically achieve the optimum information-theoretic solution (\ref{eqTheoreticalRD}). And even if not, what are the best tranforms ${\textbf{T}_{enc}}$ and ${\textbf{T}_{dec}}$ in (\ref{IQscheme}). Our preliminary results are unfortunately negative: the coding rate of the scheme in (\ref{IQscheme}) is strictly above the RDF, even for the best transforms, and even if we let the dimension $n$ go to infinity. 

${\textbf{T}_{dec}}$ can be considered as a frame used for dimension expansion after the dimension reduction performed at the encoder.
Several works explored frames which are good for other applications. In compressed sensing, for example, most commonly the goal is to maximize the spectral norm for all sub-matrices \cite{Candes}. In \cite{FrameTheoreticDFT}, \cite{systematicDFTframes} frames for coding with erasures are introduced  but they are tolerant only to specific patterns. In \cite{OptimumFramesErasure} ETFs are analyzed but only for small amount of erasures.

The main contributions introduced in this paper are the asymptotic point of view - concentration properties and universality of different structured frames, and how they compare to random i.i.d transforms; the redundant sampling (${m>k}$) approach; and the empiric observation that some ETFs are optimum (or at least local minimizers) in the sense of average signal amplification over all erasure patterns.
Section~\ref{SystemCharacterization} describes the analog coding scheme in detail. Section~\ref{TransformOptimization} analyzes the performance of a random i.i.d transform which turns out to be better than a low pass DFT frame, while Section~\ref{IrregularSpectrum} explores the superior approach of difference-set spectrum, random spectrum and general ETFs. 
\section{System Characterization }\label{SystemCharacterization}
We begin with defining the system model. 
A {\it Transform Code} is characterized by a "universal" transform at the decoder and pattern dependent transform at the encoder. 
Let ${\bf A}$ be the ${n\times m}$ matrix representation of a frame with ${n}$ ${m}$-dimensional elements as rows, where ${\frac{m}{k}\triangleq\beta}$ is the redundant-sampling ratio, which varies in the range ${1\le\beta\le\frac{1}{p}}$.\footnote{This is unlike the conventions in frame theory in which the frame elements are column vectors.} ${\bf A}$ will serve as a transformation applied at the decoder, (${\textbf{T}_{dec}}$ in (\ref{IQscheme})), independent of the pattern of important samples, as the side information is not available to the decoder. The pattern ${\bf s}$ of the important samples defines which ${k}$ rows of ${\bf A}$ contribute to meaningful values.
The corresponding rows construct the ${k\times m}$ transform ${\textbf{A}_s}$.
Denote by ${\textbf{B}_s}$ the ${m\times k}$ transform applied at the encoder, (nonzero part of ${\textbf{T}_{enc}}$ in (\ref{IQscheme})), to the vector ${\textbf{x}_s}$ of important samples. ${\textbf{f}=\textbf{B}_s\textbf{x}_s}$ is the vector of transformed samples. As illustrated in Figure~\ref{modelFig}, we use a white additive-noise model for the quantization of the ${m}$ transformed samples, e.g. entropy coded dithered quantization (ECDQ), \cite{AdamNoise}, which is blind to the locations of the important samples. As we recall, in the no-erasure case this additive-noise model can achieve the RDF with Gaussian noise (corresponding to large dimensional quantization) and Wiener filtering \cite{Berger}. 
\begin{figure}[htbp]
	\hspace{-1.5em}
	\def\svgscale{2}
	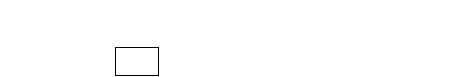
	\caption{Analog coding scheme. ${x,\hat{x}}$ are ${n}$-dimensional vectors, ${x_s}$ is ${k}$-dimensional and ${f,\tilde{f}}$ and ${q}$ are ${m}$-dimensional.}
	\label{modelFig}
\end{figure}\\
The ${k}$ reconstructed important samples are thus
\begin{equation} \label{Model}
{\bf \hat{x}}_s=\alpha\textbf{A}_s{\bf\tilde{f}}=\alpha\textbf{A}_s(\textbf{f}+\textbf{q})=\alpha\textbf{A}_s\textbf{B}_s\textbf{x}_s+\alpha\textbf{A}_s\textbf{q}
\end{equation}
where ${\bf\tilde{f}}$ is the quantized version of the transformed samples and ${\bf q}$ is a white quantization noise with variance ${\sigma^2_{q}}$.
We can deal separately with the choice of ${\textbf{B}_s}$ and ${\alpha}$. The encoder applies ${\textbf{B}_s}$ such that ${\textbf{A}_s\textbf{B}_s=\textbf{I}}$ and ${\alpha}$ is a Wiener coefficient.

Let ${\|\textbf{A}_s\|^2}$ denote the squared Frobenius norm of the matrix ${\textbf{A}_s}$ normalized by the number of rows ${{\|\textbf{A}_s\|^2}=\frac{1}{k} \|\textbf{A}_s\|_F^2=\frac{1}{k}\sum_{i=1}^{k}\|{\textbf{A}_s}_i\|^2}$, where ${\|{\textbf{A}_s}_i\|}$ is the ${\it l_2}$ norm of the ${i}$'th row. 
\subsection{\bf Rate - Distortion Derivation}
Since the decoder is blind to the transform, the rate is given by that of a white input with the same average variance \cite{Lapidot}. The rate per sample for a pattern ${\textbf{s}}$ is therefore the mutual information in an AWGN channel with a white Gaussian input: 
\begin{equation} \label{RateAllRes1}
R=\frac{m}{n}\frac{1}{2}\log\bigg(1+\frac{\frac{1}{m}E\|\textbf{f}\|^2}{\sigma^2_{q}}\bigg)
\end{equation}
\begin{equation} \label{RateAllRes2}
=\frac{m}{n}\frac{1}{2}\log\bigg(1+\frac{\sigma^2_x}{\sigma^2_{q}}\|\textbf{B}_s\|^2\bigg)
\end{equation}
where to obtain (\ref{RateAllRes2}) we substitute the average variance of the transformed samples:
\begin{equation} \label{VarFredund}
\frac{1}{m}E\|\textbf{f}\|^2=\frac{1}{m}\sum_{i=1}^{m}\sigma^2_{f_i}=\frac{1}{m}\sum_{i=1}^{m}\|{{\textbf{B}}_s}_i\|^2\sigma^2_x=\|\textbf{B}_s\|^2\sigma^2_x.
\end{equation}
For a given ${\textbf{A}_s}$, the matrix ${\textbf{B}_s}$ that minimizes the expected ${\it l_2}$ norm of ${\bf f}$ in (\ref{RateAllRes1}) is the pseudo-inverse  ${\textbf{B}_s=\textbf{A}_s'(\textbf{A}_s\textbf{A}_s')^{-1}}$, hence\footnote{${(\ )'}$ denotes the conjugate transpose.} 
\begin{equation} \label{PointScore}
\|\textbf{B}_s\|^2=\frac{1}{m}\|\textbf{B}_s\|_F^2=\frac{1}{m}\tr(\textbf{B}_s'\textbf{B}_s)=\frac{1}{m}\tr((\textbf{A}_s\textbf{A}_s')^{-1}).
\end{equation}

We shall later see that the heart of the problem is the signal amplification caused by the factor ${\|\textbf{B}_s\|^2}$ in (\ref{RateAllRes2}).
The case of ${m>k}$ is referred to as "redundant sampling", where more samples are quantized than the important ones. The motivation for this is the existence of more robust transforms in the sense of signal amplification even at the cost of some extra transmissions.

For convenience, we normalize the transform ${\bf A}$ to have unit-norm rows,
${\|{\textbf{A}}_i\|=1}$, for ${i=1,...,n}$, so that each sample of the additive quantization noise term in (\ref{Model}) has variance ${{\|\textbf{A}_s}_i\|^2\sigma^2_{q}=\sigma^2_{q}}$. 
The variance of each sample of ${\textbf{A}_s\textbf{B}_s\textbf{x}_s}$ is ${\sigma^2_x}$. As a result of the Wiener estimation the distortion is:
\begin{equation} \label{D}
D\triangleq E\bigg\{\frac{1}{k}\sum_{i=1}^{n}D(x_i,\hat{x}_i,s_i) \bigg\}=\frac{1}{k}\Ev\|\bf \hat{x}_s-\textbf{x}_s\|^2=\frac{\sigma^2_x\sigma^2_{q}}{\sigma^2_x+\sigma^2_{q}}.
\end{equation}
Combining (\ref{RateAllRes2}),(\ref{PointScore}) and (\ref{D}) we can relate the rate and distortion of the scheme for a specific pattern ${\bf s}$:
\begin{equation} \label{Rate7}
R=\frac{m}{n}\frac{1}{2}\log\bigg(1+\frac{\frac{1}{m}\tr((\textbf{A}_s\textbf{A}_s')^{-1})(\sigma^2_x-D)}{D}\bigg)
\end{equation}

We define the excess rate of the scheme as ${\delta \triangleq R-R(D)}$:
\begin{equation} \label{ExcessRate2}
\delta(\beta,\gamma,\eta_s)=\frac{k}{n}\frac{1}{2}\big[\beta\log(\eta_s\gamma+(1-\eta_s))-\log(\gamma)\big]
\end{equation}
\begin{equation} \label{ExcessRate3}
\simeq\beta \frac{p}{2}\log(\eta_s)+(\beta-1)\frac{p}{2}\log(\gamma)
\end{equation}
where ${\gamma=\frac{\sigma^2_x}{D}}$ is the signal-to-distortion ratio (SDR), 
\begin{equation} \label{IE}
\eta_s=\frac{1}{m}\tr((\textbf{A}_s\textbf{A}_s')^{-1})
\end{equation}
is the inverse energy (IE) of a pattern ${\textbf{s}}$, which is related to harmonic mean of the eigenvalues of ${\textbf{A}_s\textbf{A}_s'}$, and ${\simeq}$ is true for high resolution (${\gamma\gg1}$).
We also define ${\rho}$ as the mean logarithmic inverse energy (MLIE) of the frame ${\bf A}$:
\begin{equation} \label{MLIE}
\rho=\frac{1}{{n \choose k}}\sum_{s}\frac{m}{n}\frac{1}{2}\log(\eta_s)
\end{equation}
i.e. the average (over all possible patterns of ${k}$ important samples) excess rate above the RDF caused by signal amplification.
As we will see, for "good" transforms this average becomes asymptotically the typical case.

In the high resolution case, the excess rate (\ref{ExcessRate3}) is composed of two terms, one as the result of signal amplification and the second as a result of ${m-k}$ extra samples transmission. 
Note that for fixed (${n,k,\beta}$) minimizing the excess rate ${\delta}$  is equivalent to minimizing the MLIE ${\rho}$.

\subsection{\bf Side Information Transmission}\label{SI}
It is most natural to compare the proposed system with the alternative  naive approach of transmitting the side information regarding the locations of the important samples. Pattern transmission requires ${\frac{1}{n}\log({n \choose k})}$ bits per input sample, which is ${H_b(p)}$ bits in the limit ${n\to \infty}$.

\section{Transform Optimization} 
\label{TransformOptimization}
\subsection{\bf Band Limited Interpolation }
\label{BLinterpolation}
The most basic frame includes ${m}$ consecutive rows of the IDFT. Without loss of generality, the transform matrix ${\bf A}$ consists of the first ${m}$ columns of an IDFT matrix, meaning that the reconstructed samples are part of a band limited (lowpass) temporal signal with the quantized DFT coefficients as its spectrum (${m}$ lower frequencies). Such a transform ${\bf A}$ forms a "full spark frame" - every subset of ${m}$ rows of ${\bf A}$ is independent, i.e ${\textbf{A}_s\textbf{A}_s'}$ is full rank and invertible for every pattern ${\bf s}$ \cite{FullSparkFrames}. However, it is not robust enough to different patterns. Intuitively, though the source samples are i.i.d, the band limited model forces slow changes between close samples. Thus, it is good for a uniform sampling pattern, but for most other patterns it suffers from signal amplification that causes a severe loss in rate-distortion performance \cite{AdamNoise}. Asymptotically almost every sub-matrix ${\textbf{A}_s}$ is ill-conditioned and the IE ${\eta_s}$ (\ref{IE}) is unbounded even for redundant sampling (${\beta>1}$).
Figure~\ref{figBL_IE} shows the IE distribution for band-limited interpolation
and a random pattern ${\bf s}$. The dashed line at ${\frac{1}{\beta-1}}$ is the asymptotic theoretic value achieved by a random i.i.d. transform;
see Section~\ref{RandomTransform}. 
\begin{figure}[ht]
	\begin{center}
		\includegraphics[scale=.49]{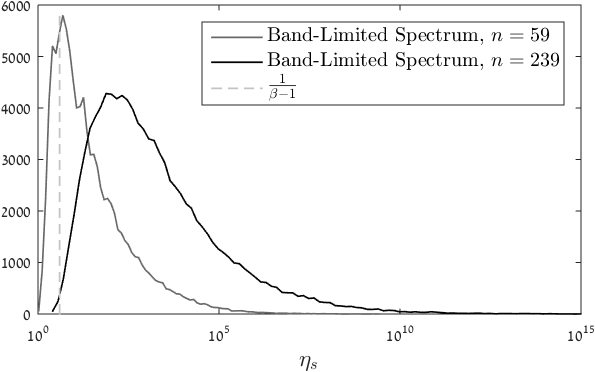}
		\caption{Logarithmic histogram of ${\eta_s}$, ${m=\lfloor\frac{n}{2}\rfloor}$, ${\beta=1.25}$.}
		\label{figBL_IE}
	\end{center}
\end{figure}
We can see that the IE diverges as ${n}$ grows.
\subsection{\bf Signal Amplification} \label{Signal Amplification}  
The following "Inversion-Amplification Lemma" describes the condition for an optimal transform.
\begin{lemma} \label{lemma1}
	{\em The IE ${\eta_s}$ in (\ref{IE}) of any ${k\times m}$ matrix ${\textbf{A}_s}$, s.t. ${\|{\textbf{A}_s}_i\|=1}$, is lower bounded as ${\eta_s\ge \frac{k}{m}}$, with equality iff ${\textbf{A}_s\textbf{A}'_s= \textbf{I}}$.}
\end{lemma}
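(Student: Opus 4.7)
The plan is to reduce the claim to a standard AM--HM inequality on the eigenvalues of $M \triangleq \textbf{A}_s\textbf{A}_s'$. Observe that $M$ is a $k\times k$ positive semi-definite matrix, and the unit-norm row hypothesis $\|{\textbf{A}_s}_i\|=1$ says exactly that every diagonal entry of $M$ equals $1$, so $\tr(M)=k$. If $M$ is singular then $(\textbf{A}_s\textbf{A}_s')^{-1}$ does not exist and it is natural to interpret $\eta_s=\infty$, in which case the bound is vacuous; so I would restrict attention to the case $M\succ 0$ and let $\lambda_1,\dots,\lambda_k>0$ denote its eigenvalues.

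Next I would apply the arithmetic--harmonic means inequality to these eigenvalues:
\begin{equation*}
\frac{1}{k}\sum_{i=1}^k \lambda_i \;\ge\; \frac{k}{\sum_{i=1}^k \lambda_i^{-1}} ,
\end{equation*}
which rearranges to $\tr(M^{-1})=\sum_i \lambda_i^{-1}\ge k^2/\tr(M)=k$. Dividing by $m$ gives $\eta_s=\tr(M^{-1})/m\ge k/m$, which is the desired lower bound.

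For the equality condition, AM--HM holds with equality iff all the $\lambda_i$ are equal; combined with $\sum_i\lambda_i=\tr(M)=k$ this forces $\lambda_i=1$ for every $i$. Since $M$ is Hermitian, having all eigenvalues equal to $1$ is equivalent to $M=\textbf{I}$. The converse is immediate: if $\textbf{A}_s\textbf{A}_s'=\textbf{I}$ then $\tr((\textbf{A}_s\textbf{A}_s')^{-1})=k$ and $\eta_s=k/m$.

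There is no real obstacle here; the only subtlety is remembering to handle the possibly singular case (which is automatic under the convention $\eta_s=\infty$) and to justify that equality of eigenvalues together with the trace constraint pins down $M$ to be exactly the identity rather than a scalar multiple of it. The structural interpretation worth emphasising after the proof is that equality characterises the case in which the $k$ selected rows of $\textbf{A}$ form a tight frame for $\mathbb{C}^m$-oriented rows, i.e., an orthonormal family; this motivates the search later in the paper for frames whose \emph{every} $k$-subset behaves as close to this extremal case as possible.
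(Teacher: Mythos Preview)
Your proof is correct and follows essentially the same route as the paper: both arguments apply the arithmetic--harmonic mean inequality to the eigenvalues of $\textbf{A}_s\textbf{A}_s'$, use the unit-row-norm hypothesis to fix $\tr(\textbf{A}_s\textbf{A}_s')=k$, and identify the equality case with all eigenvalues equal to $1$. Your version is slightly more explicit about the singular case and about why the trace constraint forces the identity rather than a scalar multiple, but there is no substantive difference.
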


\begin{proof}
	Denote by ${\{\lambda_i\}_{i=1}^{k}}$ the eigenvalues of ${\textbf{A}_s\textbf{A}'_s}$.
	\begin{align}
		&1=\frac{1}{k}\tr(\textbf{A}_s\textbf{A}'_s)=\frac{1}{k}\sum_{i=1}^{k}\lambda_{i}\ge \frac{1}{\frac{1}{k}\sum_{i=1}^{k}\frac{1}{\lambda_{i}}}=\frac{1}{\frac{1}{k}\tr((\textbf{A}_s\textbf{A}'_s)^{-1})}
		\nonumber \\
		& \Rightarrow \;
		\frac{1}{m}\tr((\textbf{A}_s\textbf{A}'_s)^{-1})=\frac{1}{\beta}\frac{1}{k}\tr((\textbf{A}_s\textbf{A}'_s)^{-1})\ge\frac{1}{\beta}.	
		\nonumber
	\end{align}
	where the inequality follows from the arithmetic-harmonic mean inequality, with equality iff all the eigenvalues are equal, i.e ${\textbf{A}_s\textbf{A}'_s= \textbf{I}}$.
\end{proof}

For ${\beta=1}$, Lemma~\ref{lemma1} becomes ${\|\textbf{A}^{-1}_s\|\ge1}$, with equality iff ${\textbf{A}_s}$ is unitary. Thus for a non-unitary transform the signal is amplified by the factor ${\|\textbf{A}^{-1}_s\|}$.

\subsection{\bf Random i.i.d Transforms } \label{RandomTransform}   
For "digital" coding, we know that random i.i.d codes are optimal. Thus, a natural approach is to investigate the asymptotic performance of a random transform. Consider a matrix ${\textbf A}$ whose entries are i.i.d Gaussian random variables with variance ${\frac{1}{m}}$. For any ${k\times m}$ sub-matrix ${{\textbf{A}_s}}$, ${\lim_{k\to \infty}\|{\textbf{A}_s}_i\|=1}$ almost surly. 

\subsubsection{\bf Amplification Analysis}
We bring here two results which show that for ${m=k}$, random i.i.d transform is definitely bad in the sense of amplification:

\begin{equation} \label{RandomAmp1}
\lim_{k\to \infty}P[\frac{1}{k}\tr((\textbf{A}_s\textbf{A}')^{-1}_s)\ge 1+\zeta]=1,\ \ \ \forall \zeta\ge0
\end{equation}
This means that with square random matrix we cannot achieve a 'non-amplifying' transformation. Moreover, w.p.1 the amplification diverges for large dimensions.
For  ${k \to \infty}$ we can bound the divergence rate as follows:
\begin{equation} \label{RandomAmpBound}
\frac{k^2}{2\pi e} \le E\bigg[\frac{1}{k}\tr((\textbf{A}_s\textbf{A}')^{-1}_s)\bigg]\le \frac{k^3}{2\pi e}
\end{equation}
The proof of both of these results is based on characterization of the minimum eigenvalue of a random matrix \cite{RandomMatrix1}, and is omitted due to space constraints.

For ${m>k}$ the amplification is finite. As random matrix theory shows, \cite{RandomMatrix1}, if ${\textbf{H}}$ is an ${r\times t}$ random matrix with i.i.d entries of variance ${\frac{1}{r}}$ and ${\frac{t}{r}\to \beta, \beta >1}$, then 

\begin{equation} \label{RandomMatrix2}
\lim_{r\to \infty}\frac{1}{r}\tr((\textbf{H}\textbf{H}')^{-1})=\frac{1}{\beta -1} \; \; a.s.
\end{equation}
A ${k\times m}$ sub-matrix ${\textbf{A}_s}$ has element variance of ${\frac{1}{m}}$. Denote ${\textbf{H}=\sqrt{\frac{m}{k}}\textbf{A}_s}$, which has element variance of ${\frac{1}{k}}$: 
\begin{displaymath}
\frac{1}{m}\tr((\textbf{A}_s\textbf{A}_s')^{-1})=\frac{1}{k}\tr((\textbf{H}\textbf{H}')^{-1})
\end{displaymath}
\begin{equation} \label{RandomAmp3}
\Rightarrow \lim_{k\to \infty}\eta_s=\frac{1}{\beta -1}.
\end{equation}
\subsubsection{\bf Comparison to the SI Transmission Benchmark (Section~\ref{SI})}
Substituting (\ref{RandomAmp3}) as the IE in (\ref{ExcessRate2}) we get the following expression for the excess rate using a random transform: 
\begin{equation} \label{RateLoss}
\delta=\frac{k}{n}\frac{1}{2}\bigg[\beta\log\bigg(\frac{1}{\beta-1}\gamma+1-\frac{1}{\beta-1}\bigg)-\log(\gamma)\bigg]
\end{equation}
For some scenarios this outperforms the naive side information transmission. 

Figure~\ref{figRateLossVsSDR} shows the asymptotic excess rate above (\ref{eqTheoreticalRD}) for random transform with optimal ${\beta}$ for each SDR compared to the cost of SI transmission.
\begin{figure}[ht]
	\begin{center}
		\includegraphics[scale=.49]{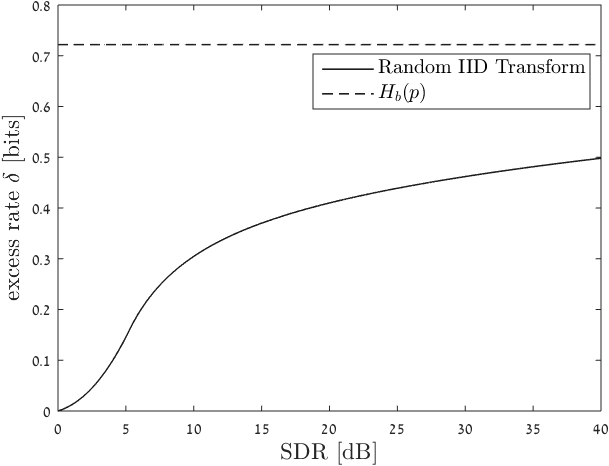}
		\caption{Rate loss for ${\frac{k}{n}=\frac{1}{5}}$.}
		\label{figRateLossVsSDR}
	\end{center}
\end{figure}
In the limit of high SDR the expression for the excess rate (for the best choice of ${\beta}$) takes the following form:
\begin{equation} \label{RateLossHighSDR}
\delta=\frac{k}{n}\frac{1}{2}\log(\ln(\gamma)),
\end{equation}
which goes (very slowly) to ${\infty}$. Nevertheless, for reasonably high SDR there is an advantage to the random matrix approach relative to the benchmark.
Analysis of ${\beta}$ which minimizes the rate loss in (\ref{RateLoss}) and the proof of (\ref{RateLossHighSDR}) are omitted here due to space limitations. 
\section{Irregular Spectrum } \label{IrregularSpectrum}     
As we saw in Section~\ref{TransformOptimization}, a band-limited DFT frame suffers from high signal amplification for non-uniform erasure patterns. In contrast, the signal amplification of a random frame is invariant to the erasure pattern. We can increase the robustness of a DFT-based frame to the erasure pattern by selecting an irregular "symmetry breaking" set of frequencies.
Thus, the encoder performs interpolation to a signal with irregular spectrum. 

Every choice of frequency pattern forms a frame ${\bf A}$, which consists of the columns of the IDFT matrix that correspond to the frequency pattern. 
For a general spectral patten the corresponding frame is not necessarily full spark (for a general ${n}$). But for prime ${n}$, Chebotarev's theorem guarantees that for every spectral choice and every pattern of important samples, ${\textbf{A}_s^{-1}}$ (or the pseudo inverse) exists \cite{Chebotarev}. We thus restrict the discussion to prime ${n}$'s when exploring the DFT transform.  

\subsection{\bf Difference-Set Spectrum} \label{DSS}  
For small dimensions it is possible to exhaustively check all spectrum choices and look for the one with the best worst case or average amplification (logarithmic IE). It turns out that the best spectrum consists of frequencies from a {\it difference set} (DS), forming the so called difference-set spectrum (DSS). 

{\it Definition:} an ${m}$ subset of ${\mathbb{Z}_n}$ is a ${(n,m,\lambda)}$ difference set if the distances (modulo ${n}$) of all distinct pairs of elements take all possible values ${1,2,..,n-1}$ exactly ${\lambda}$ times. The three parameters must satisfy the relation
\begin{equation}\label{DSrelation}
\lambda (n-1)=m(m-1). 
\end{equation}
Difference sets are known to exist for some pairs of ${(n,m)}$. We consider the case of prime ${n}$ and ${m\approx \frac{n}{2}}$. Specifically, we consider a {\it Quadratic Difference Set} \cite{WB_DSS} with the following parameters:
\begin{equation}\label{QuadraticDS}
\bigg(n=p, m=\frac{p-1}{2}, \lambda=\frac{p-3}{4}\bigg), \;\;\;\; p-prime.
\end{equation}
This DS can be constructed by a cyclic sub group ${\langle g\rangle}$, for some element ${g}$ from the multiplicative group of ${\mathbb{Z}_n}$. 

An ${n\times m}$ DSS transform ${\textbf{A}}$ is constructed from the ${m}$ columns of an ${n\times n}$ IDFT matrix, that correspond to indices from a difference set. The normalization of the IDFT is such that ${\|\textbf{A}_i\|=1}$, i.e the absolute value of the elements is ${\frac{1}{\sqrt{m}}}$. 

\subsection{\bf Random Spectrum and the MANOVA Distribution} \label{RandomSpectrum}  

Interestingly, asymptotically in this setup, a {\it random spectrum} achieves similar performance as the DSS. Recall that the IE is determined by the eigenvalue distribution (Lemma~\ref{lemma1}). Farrell showed in \cite{Farrell} that for a random spectrum, the limiting empirical eigenvalue distribution of ${\textbf{A}_s\textbf{A}_s'}$, for a randomly chosen ${\textbf s}$, converges almost surly to the limiting density of the Jacobi ensemble. The Jacobi ensemble corresponds to the eigenvalue distribution of MANOVA matrices - random matrices from multi-variate distribution \cite{MANOVA}. With our notations the MANOVA distribution is equal to:
\begin{equation}\label{Jacobi}
f^{M}(x)=\frac{\beta\sqrt{(x-r_-)(r_+-x)}}{2\pi x(1-\frac{m}{n}x)}\cdot I_{(r_-,r_+)}(x),
\end{equation}
\begin{equation}\label{JacobiExtrimalValues}
r_\pm=\bigg(\sqrt{(1-\frac{m}{n})\frac{1}{\beta}}\pm\sqrt{1-\frac{m}{n\beta}}\bigg)^2.
\end{equation}

We explore cases with ${\frac{m}{n}\rightarrow\frac{1}{2}}$, thus:
\begin{equation}\label{ManovaScore}
\lim_{n\to \infty}\eta_s=\frac{1}{\beta}E_{f^M}(X^{-1})=\frac{1}{\beta}\int\frac{1}{x}f^M(x)dx
\end{equation}
\begin{equation}\label{ManovaScore2}
=\int_{1-c}^{1+c}\frac{\sqrt{c^2-(x-1)^2}}{\pi x^2(2-x)}dx, \;\;\; c=\sqrt{\frac{1}{\beta}\bigg(2-\frac{1}{\beta}\bigg)}.
\end{equation}

Figure~\ref{figIE_rate} (on the left) shows the histogram of ${\eta_s}$ over randomly chosen patterns for a large dimension in the case of a random i.i.d transform and a DSS transform, as well as random spectrum transform.   
We see that the IE of a random i.i.d transform concentrates on ${\frac{1}{\beta-1}}$ (\ref{RandomAmp3}). For DSS/random spectra, almost all patterns (sub-matrices) are equivalent, and the IE concentrates on a lower value which fits the MANOVA density based calculation (\ref{ManovaScore2}). The ideal lower bound (of Lemma~\ref{lemma1}) is also presented.
The advantage of these structured transforms lies in their better eigenvalue distribution.

Figure~\ref{figEigenValues} shows the empirical distribution of the eigenvalues of ${\textbf{A}_s\textbf{A}_s'}$ for different transforms. Figure~\ref{figEigenValues} shows also the theoretical limiting eigenvalue density of an i.i.d random matrix (Marchenko-Pastur) \cite{RandomMatrix1} and the MANOVA distribution (\ref{Jacobi}).
\begin{figure}[ht]
	\begin{center}
		\includegraphics[scale=.49]{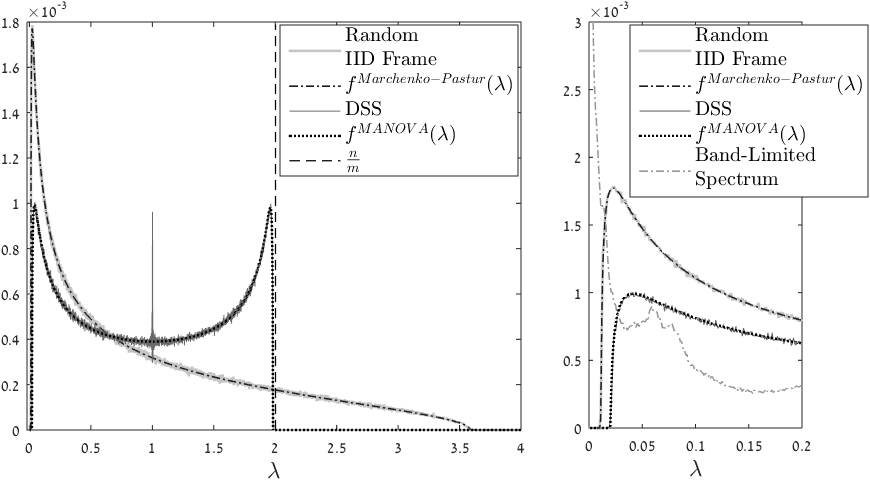}
		\caption{Eeigenvalue distribution of ${\textbf{A}_s\textbf{A}_s'}$, ${n = 947, \beta = 1.25}$.
			The graph on the right hand side zooms in into the behavior near zero.}
		\label{figEigenValues}
	\end{center}
\end{figure}
Observe the concentration property of the eigenvalue empirical distribution and of the IE ${\eta_s}$ (\ref{IE}); namely with high probability these random functionals are close to their mean value. It is evident that DSS also fits the asymptotic MANOVA density of a random spectrum.
Observe also that the minimal eigenvalue of a random i.i.d transform is closer to zero than that of DSS and thus its contribution to the IE amplification is more significant (see the zoom in graph on the right). As ${\beta}$ decreases, the extremal eigenvalues move towards the edges (0 and ${\frac{n}{m}}$), and the minimal eigenvalue becomes the most dominant for the IE. For ${\beta =1}$, the support of the density function approaches zero, and as a result the IE diverges
and there is no concentration to a finite value. Note that in band limited spectrum this is the case even for ${\beta > 1}$. 

\subsection{\bf Equiangular Tight Frames} \label{ETFs}  

It turns out that a DSS spectrum is a special case of an {\it equiengular tight frame} (ETF) or a {\it maximum-Welch-boud-equality codebook} (MWBE) {\cite{WB_DSS}}. Moreover, we observe that asymptotically many different ETFs are similar in terms of their ${\eta_s}$ distribution. 

An ${n\times m}$ ETF transform ${\textbf{A}}$ is defined as a tight frame (i.e., it satisfies ${\textbf{A}'\textbf{A}=\frac{n}{m}\textbf{I}_m}$) such that the absolute value of the correlation between two rows is constant for all pairs and equal to the Welch bound:
\begin{equation}\label{WBwc3}
|\textbf{a}_l\textbf{a}'_{l'}| = \sqrt{\frac{n-m}{(n-1)m}}=cos(\theta),\;\;\;\;\forall l\not= l'.
\end{equation}
The matrix ${\textbf{A}\textbf{A}'}$ is Hermitian positive semidefinite whose diagonal elements are 1 and whose off-diagonal elements have equal absolute value ${cos(\theta)}$ as in (\ref{WBwc3}). It has ${m}$ eigenvalues equal to ${\frac{n}{m}}$ (same as in ${\textbf{A}'\textbf{A}}$) and the rest ${n-m}$ eigenvalues are zero.
For any ${k\le m}$ rows of ${\textbf{A}}$ (induced by the important samples pattern ${s}$) ${\textbf{A}_s\textbf{A}_s'}$ is positive definite. The absolute value of the off-diagonal elements of the ${k\times k}$ matrix ${\textbf{A}_s\textbf{A}_s'}$ is also ${cos(\theta)}$ but its eigenvalue spectrum is induced by the subset of the element's phases. The distribution of this spectrum is the main issue of interest when exploring the IE ${\eta_s}$ (Figure~\ref{figEigenValues}).

While DFT-based transforms assume a complex source, ETFs allow us to consider also real valued sources, which is our original motivation. As stated before, other types of ETFs mentioned above achieve similar IE distribution.  
Figure~\ref{figIE_rate} (on the right) shows the histogram of ${\delta}$, as defined in (\ref{ExcessRate3}), for a real random i.i.d transform as well as Paley's real ETF \cite{Palye34inGrassmannian}, in high SDR. 
\begin{figure}[ht]
	\begin{center}
		\includegraphics[scale=.49]{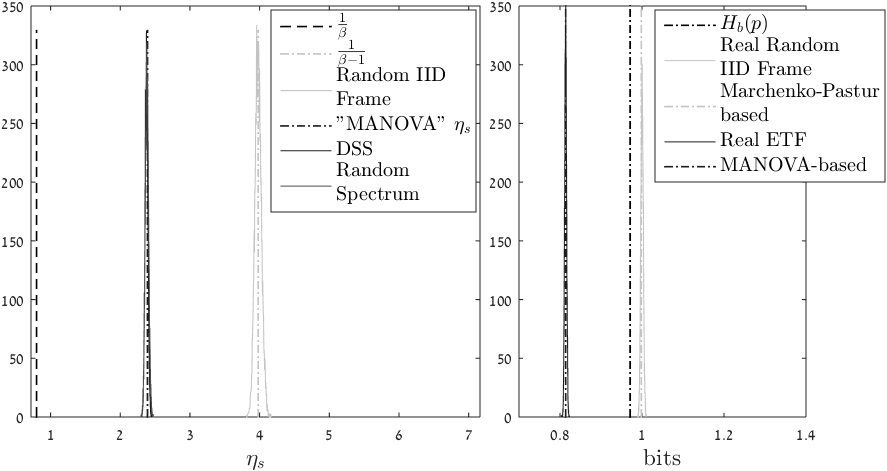}
		\caption{Left: Histogram of the inverse energy ${\eta_s}$ for ${n = 947, \beta = 1.25}$. 
			Right: Histogram of ${\delta}$, ${n = 1902, m = \frac{n}{2}, \beta = 1.25, SDR = 30dB}$.}
		\label{figIE_rate}
	\end{center}
\end{figure}
We can see that for this setup the rate of a random i.i.d transform exceeds that of SI transmission, but a scheme based on a real-valued ETF achieves a lower rate.

Finally, for given ${\frac{k}{n}}$ and ${\frac{m}{n}}$ values, we observe that many different frames are asymptotically equivalent and share a similar MANOVA eigenvalue distribution. In [16] we further study this asymptotic behavior for a gallery of structured frames, including different ETFs but not only. (As far as we know, no results were proved on the asymptotic spectra of submatrices of any frame other than a random choice of columns from DFT or Haar matrices.) 
We conjecture that these frames are asymptotically optimal for analog coding of a source with erasures. Moreover, we have a strong evidence that for every dimension where an ETF exists, it is optimal in the sense of the average excess rate caused by signal amplification; i.e, it minimizes the MLIE (\ref{MLIE}) over all possible ${(n,m)}$ frames. In particular, we verified for specific dimensions that DSS and real ETF (based on Paley’s construction of symmetric conference matrices) are local minimizers of the MLIE.
\section{Concluding Remarks } \label{Conc}   
Our results indicate that analog coding with "good" frames outperforms both band-limited interpolation and random i.i.d frames and beat the naive benchmark (Section~\ref{SI}), but it is inferior to the pure digital (rate-distortion function achieving) solution. DSS frames, and more generally - ETFs - seem to be the natural candidates for optimal analog-coding frames. Furthermore, a large family of frames, including these ones, seem to exhibit a common limiting behavior of a MANOVA spectrum. Although the question of whether they are the best frames remains open, our current results strongly support a positive answer. 

\section*{Acknowledgement}
We thank Matan Gavish for fruitful discussions and for introducing us with Farrel's work \cite{Farrell}.

\end{document}